%

\documentclass{article}
\usepackage[utf8]{inputenc}

\usepackage{graphicx}
\usepackage{time}

\usepackage{amsthm} 

\newtheorem{theorem}{Theorem}

\newtheorem{lemma}[theorem]{Lemma}

\title{A Gray Code of Ordered Trees}

\author{
 Shin-ichi Nakano\\  Gunma University
}

\begin{document}

\maketitle


\vskip 10mm

\noindent
{\bf Abstract}
A combinatorial Gray code for a set of combinatorial objects
is a sequence of
all combinatorial objects in the set 
so that 
each object is derived from the preceding object
by changing a small part.

In this paper 
we design a Gray code for ordered trees with $n$ vertices
such that
each ordered tree is derived from the preceding ordered tree
by removing a leaf then appending a leaf elsewhere.
Thus 
the change is just remove-and-append a leaf,
which is the minimum.

\section{Introduction}
\label{sec:intro}

A classical Gray code for $n$-bit binary numbers
is a sequence of all $n$-bit binary numbers 
so that 
each number is derived from the preceding number
by changing exactly one bit.
A combinatorial Gray code for a set of combinatorial objects
is a sequence of
all combinatorial objects in the set
so that 
each object is derived from the preceding object
by changing a small (constant) part.

When we generate all combinatorial objects
and the number of such objects is huge
if we can compute them as a combinatorial Gray code
then 
we can output (or store)
each object as a small size of the difference from the preceding object
and we may compute each object in a constant time.
Also,
when we repeatedly solve some problem for a class of objects,
a solution for an object
may help to compute a solution 
for a similar successive object.
See surveys for combinatorial Gray codes  \cite{S97, M22}.

For binary trees
with $n$ vertices
one can generate all binary trees 
so that 
each binary tree is derived from the preceding binary tree
by a rotation operation at a vertex \cite{L87, L93}.
The number of change of edges 
in a rotation operation
is three \cite[p9]{K06}.
Also one can generate all binary trees 
with $n$ vertices
so that 
each tree is derived from the preceding tree
by removing a subtree and place it elsewhere \cite[Exercise 25]{K06}.
However the levels of many vertices may be changed,
where 
the level of a vertex is the number of vertices
on the path from the vertex to the root.


In this paper 
we design a Gray code for ordered trees with $n$ vertices
such that
each ordered tree is derived from the preceding ordered tree
by removing a leaf then appending a leaf elsewhere.
Thus 
the change is just remove-and-append a leaf,
which is the minimum,
and  other vertices remain as they were
including their levels.
Our Gray code is based on a tree structure 
among the ordered trees.


The remainder of this paper is organized as follows.
Section 2 gives
some definitions and basic lemmas.
In Section 3 we  design our algorithm
to construct a Gray code for the ordered trees with $n$ vertices.
Finally Section 4 is a conclusion.

\section{Preliminaries}

{\it A tree} is a connected graph with no cycle.
{\it A rooted tree} is a tree 
with a designated vertex as {\it the root}.
{\it The level of a vertex} $v$ in a rooted tree is the number of vertices
on the path from $v$ to the root.
The level of the root
is $1$.
For each vertex $v$ except the root
if the neighbor vertex of $v$ on the path from $v$ to the root
is $p$ then 
$p$ is {\it the parent} of $v$ and $v$ is {\it a child} of $p$.
The root has no parent.
In this paper 
we always draw each child vertex 
below its parent.
A vertex with no child is called {\it a leaf}.
{\it An ordered tree} is a rooted tree 
in which the left-to-right order of child vertices 
of each vertex is defined.
The number of ordered trees with exactly $n+1$ vertices
is known as the $n$-th Catalan number $_{2n}C_n / (n+1)$ \cite[p12]{K06}.

Given an ordered tree $T$,
let $P_r(T)=(v_0,v_1,\cdots, v_k)$
be the path from the root $v_0$ to a leaf $v_k$
such that, 
for each $i=1,2,\cdots,k$,
$v_i$ is the rightmost child of $v_{i-1}$.
$P_r(T)$ is called 
{\it the rightmost path} of $T$ and
$v_k$ is called
{\it the rightmost leaf} of $T$.
The number of edges in $P_r(T)$
is denoted by $rpl(T)$.

For an ordered tree $T$
if the rightmost child of the root has exactly one child
as a leaf
then we say
$T$ has {\it the pony-tail}.

For two distinct ordered trees $T$ and $T'$,
if $T'$ is derived from $T$
by appending a new leaf as the rightmost leaf
then removing other leaf,
then we say 
$T$ {\it is copying} $T'$ (at level $rpl(T')$).
%
When $T$ is copying $T'$
if the parent of the rightmost leaf of $T'$
has two or more child vertices
then $rpl(T)\ge rpl(T')$ holds,
otherwise, 
the parent of the rightmost leaf of $T'$
has exactly one child vertex, which is the rightmost leaf,
and
$rpl(T)= rpl(T')-1$ holds.
So
if $T$ is copying $T'$,
$rpl(T)=1$ and $rpl(T')>1$ then $T'$ has the pony-tail.


Let $S_k$ be the set of the ordered trees 
with exactly $k$ vertices.
In this paper we design,
for each $k=1,2,\cdots, n$, a combinatorial Gray code
for $S_k$,
that is a sequence of all ordered trees in $S_k$
such that
each ordered tree is derived from
the preceding ordered tree
by removing a leaf
then appending a leaf elsewhere.
We call the change
{\it delete-and-append a leaf}.

For an ordered tree $T$ with $n\ge 2$ vertices
let $p(T)$ be the ordered tree 
derived from $T$ by removing the rightmost leaf.
We say $p(T)$ is {\it the parent} of $T$,
and $T$ is a child of $p(T)$.
For any ordered tree $T$ in $S_n$
if we repeatedly compute the parent of the derived ordered tree
we obtain the sequence $T,p(T), p(p(T)),\cdots $ of ordered trees,
which ends with the trivial ordered tree 
consisting of exactly one vertex.
We call the sequence {\it the removing sequence of $T$} \cite{N02}.

\begin{figure}[htb]
  \begin{center}
     \includegraphics[width=13cm,pagebox=cropbox,clip]{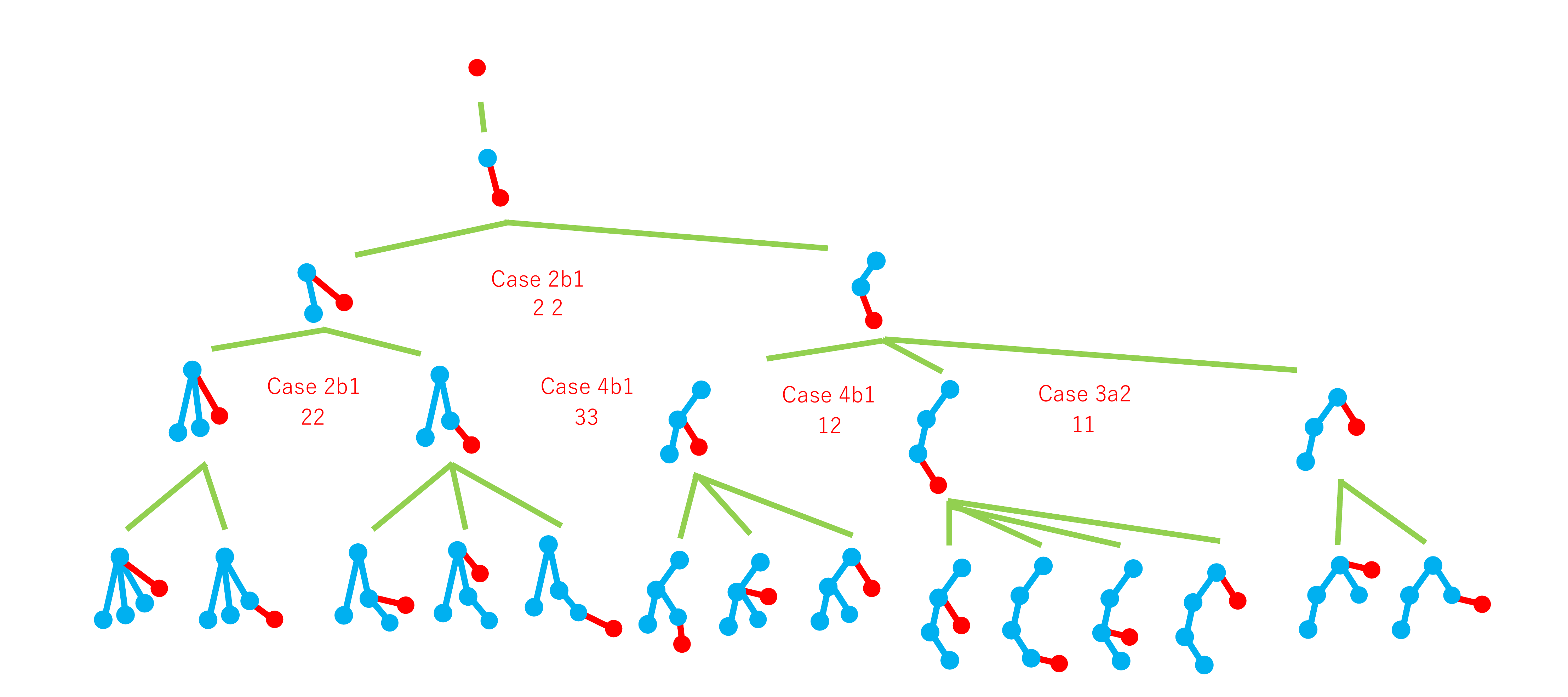}
  \end{center}
  \caption{
   The family tree $F_n$ of $S_n$.
  }
  \label{fig:ftree}
\end{figure}

By merging the removing sequences of the ordered trees in $S_n$
one can obtain an (unordered) tree $F_n$
of ordered trees \cite{N02} (See an example for $n=5$ in Fig. \ref{fig:ftree})
in which 
the root corresponds to the trivial ordered tree with exactly one vertex,
each vertex at level $k$ corresponds 
to some ordered tree in $S_k$,
and
each edge corresponds to some ordered tree and its parent.
We call the tree  {\it the family tree}.
Note that we have not decide yet 
the left-to-right order of the child ordered trees
of each order tree in $F_n$.
We have the following three lemmas.

\begin{lemma}\label{le:ftree1}
There is a bijection
between 
the ordered trees in $S_k$ and 
the vertices at level $k$
in $F_n$.
\end{lemma}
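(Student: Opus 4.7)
The plan is to exhibit the map $\phi : S_k \to \{\text{vertices at level } k \text{ in } F_n\}$ that sends each ordered tree $T \in S_k$ to the vertex of $F_n$ that it labels, and to verify that $\phi$ is both injective and surjective.

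First I would verify that $\phi$ is well defined on all of $S_k$, i.e.\ that every $T \in S_k$ actually appears in $F_n$. Since $F_n$ is the union of the removing sequences of the ordered trees in $S_n$, it suffices to exhibit some $T^\ast \in S_n$ whose removing sequence passes through $T$. For this, starting from $T$ I would append $n-k$ leaves in turn, each time as the new rightmost leaf of the current tree (for instance as a child of the current rightmost leaf); after $n-k$ such appendings one reaches a tree $T^\ast \in S_n$, and by construction $p(T^\ast), p(p(T^\ast)), \ldots$ encounters $T$ after exactly $n-k$ parent steps. Because the number of vertices decreases by one at each parent step and $T^\ast$ has $n$ vertices, $T$ appears at level $k$ of $F_n$. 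This also gives surjectivity: any vertex of $F_n$ at level $k$ lies on some removing sequence starting from a $T^\ast \in S_n$, and the tree at position $n-k+1$ of that sequence has exactly $k$ vertices, hence lies in $S_k$.

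For injectivity, I would use the fact that $p(\cdot)$ is a function: each ordered tree with at least two vertices has a uniquely determined parent obtained by deleting its rightmost leaf. Thus when the removing sequences of the trees in $S_n$ are merged, two occurrences of the same ordered tree $T$ (coming from possibly different $T^\ast$'s) must be identified into a single vertex of $F_n$, while two distinct trees in $S_k$ remain distinct vertices. Hence distinct members of $S_k$ give distinct vertices of $F_n$ at level $k$, proving $\phi$ is injective.

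The main obstacle, and really the only substantive point, is the surjectivity/well-definedness step, namely guaranteeing that every $T \in S_k$ can be extended to some $T^\ast \in S_n$; the explicit extension by repeatedly appending rightmost leaves settles this cleanly. Once that is in place, the bijectivity of $\phi$ is immediate from the definitions of the parent relation and of $F_n$.
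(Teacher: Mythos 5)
Your proposal is correct and follows essentially the same route as the paper: both arguments show every $T \in S_k$ lies on the removing sequence of some $T^\ast \in S_n$ by repeatedly appending a new rightmost leaf (the paper appends as the rightmost child of the root, you append below the current rightmost leaf; either works since the new vertex is the rightmost leaf in both cases), and both treat the converse direction as immediate. Your explicit injectivity discussion is a minor elaboration of what the paper leaves implicit.
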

\begin{proof} 
Given an ordered tree $T$ with exactly $k$ vertices,
by repeatedly appending a new leaf 
as the rightmost child of the root,
one can obtain a descendant tree
$T'\in S_n$ in $F_n$.
Thus every order tree in $S_k$ 
appears in the removing sequence of some tree in $S_n$
and so
corresponds to a vertex 
at level $k$ in $F_n$.

Clearly every vertex at level $k$
in $F_n$
corresponds to an ordered tree with exactly $k$ vertices.
\end{proof} 

\begin{figure}[htb]
  \begin{center}
     \includegraphics[width=9cm,pagebox=cropbox,clip]{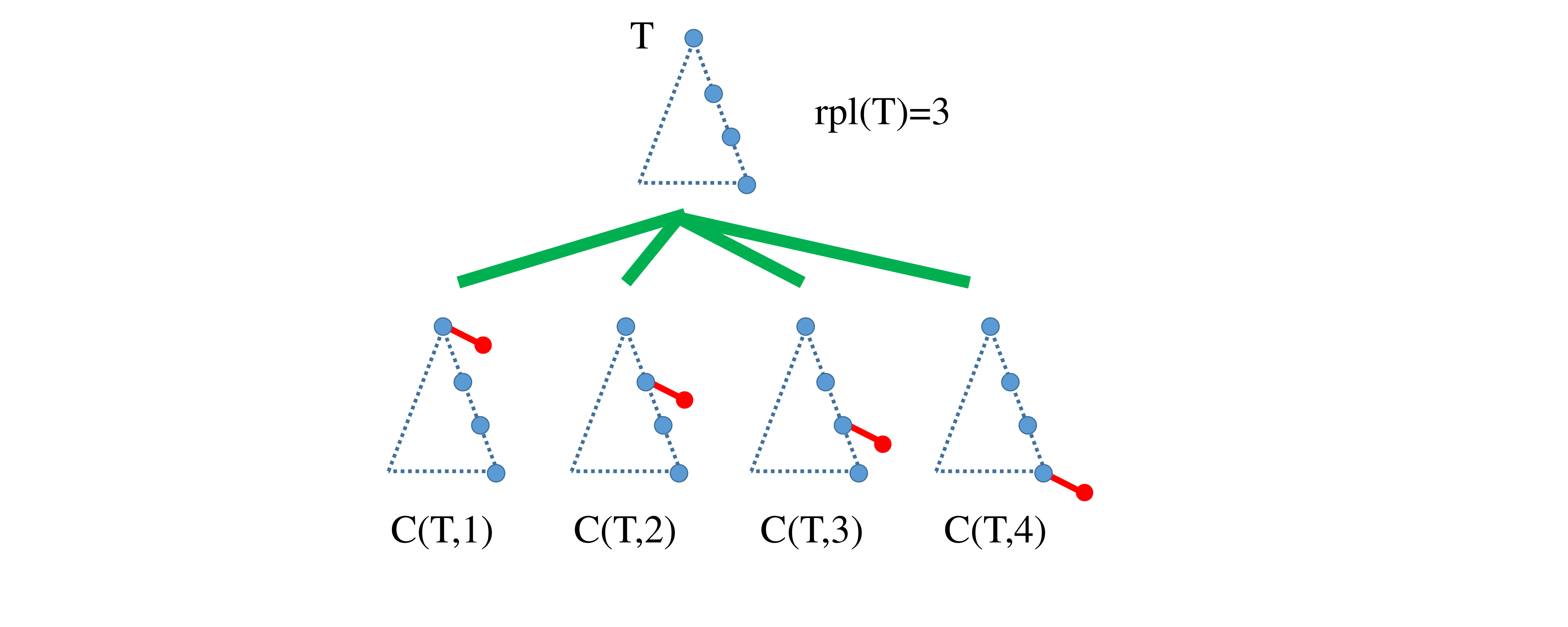}
  \end{center}
  \caption{
   An illustration for Lemma 
   \ref{le:ftree2}.
  }
  \label{fig:child}
\end{figure}

\begin{lemma}\label{le:ftree2}
Let $T$ be an ordered tree in $S_k$ 
with $k<n$.
$T$ has $rpl(T)+1$ child ordered trees in $F_n$.
\end{lemma}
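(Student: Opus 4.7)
The plan is to characterize the children of $T$ in $F_n$ explicitly, then count them. By definition of $F_n$, the children of $T$ are exactly those ordered trees $T'\in S_{k+1}$ with $p(T')=T$, i.e., those $T'$ from which removing the rightmost leaf yields $T$. So I would reinterpret this: every child $T'$ is obtained from $T$ by appending some new leaf $v$, and the crucial constraint is that this appended $v$ must itself be the rightmost leaf of $T'$ (otherwise $p(T')$ removes a different vertex and cannot equal $T$).

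Next I would analyze where $v$ can be attached so that $v$ becomes the rightmost leaf of $T'$. Write the rightmost path as $P_r(T)=(v_0,v_1,\ldots,v_k)$ where $k=rpl(T)$, so the path has $rpl(T)+1$ vertices. I would argue in two directions. First, for each $i\in\{0,1,\ldots,rpl(T)\}$, attaching $v$ as the rightmost child of $v_i$ produces a tree $T'$ whose new rightmost path goes $v_0,\ldots,v_i,v$; hence $v$ is the rightmost leaf of $T'$, so $p(T')=T$, showing $T'$ is a child of $T$. These $rpl(T)+1$ trees are pairwise distinct since they differ in which vertex has gained the new child. Second, I would show these are the only children: if $v$ is attached to a vertex $u$ not on $P_r(T)$, or attached to some $v_i$ but not as the rightmost child, then the rightmost path of $T'$ still ends at the old rightmost leaf $v_k$, so $p(T')\ne T$.

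Combining the two directions gives exactly $rpl(T)+1$ children, as claimed. I do not anticipate a serious obstacle; the only subtlety is being careful in the "only if" direction to separate the case $u\notin P_r(T)$ from the case $u=v_i$ but with $v$ not inserted as the rightmost child of $v_i$, in both of which the old rightmost leaf $v_k$ remains the rightmost leaf of $T'$. A small figure of the rightmost path with the $rpl(T)+1$ candidate insertion points (already provided by the authors in Fig.~\ref{fig:child}) makes the bijection visually transparent.
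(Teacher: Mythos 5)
Your proof is correct and takes essentially the same approach as the paper's: both identify the children of $T$ in $F_n$ with the $rpl(T)+1$ trees obtained by appending a new rightmost child at each vertex of $P_r(T)$. You additionally spell out the converse direction (that no other attachment point can produce a child, since the new vertex would then fail to be the rightmost leaf), which the paper leaves implicit, so your write-up is if anything slightly more complete.
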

\begin{proof} 
For each $i=1,2,\cdots,rpl(T)+1$,
by appending a new leaf as the rightmost child leaf
of the vertex on $P_r(T)$ at level $i$,
one can obtain a distinct child ordered tree.
See Fig.\ref{fig:child}.
\end{proof} 

We denote by $C(T,i)$
the child ordered tree of $T$
derived from $T$
by appending a new leaf as the rightmost child leaf
of the vertex on $P_r(T)$ at level $i$.
Thus $rpl(C(T,i))=i$.

Thus, by Lemma \ref{le:ftree2},
every ordered tree $T$ in $S_k$ with $k<n$
except the ordered tree with exactly one vertex
has 
two or more child ordered trees in $F_n$
since $rpl(T)\ge 1$.
Clearly the ordered tree with exactly one vertex
has 
exactly one child ordered tree in $F_n$.

\begin{lemma}\label{le:ftree3}
Any ordered tree is derived from its sibling ordered tree
by delete-and-append a leaf.
\end{lemma}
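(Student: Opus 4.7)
The plan is to reduce the statement to the explicit parametrization of children given by Lemma \ref{le:ftree2}. Let $T_1$ and $T_2$ be sibling ordered trees in $F_n$, so they share a common parent $T' = p(T_1) = p(T_2)$. By Lemma \ref{le:ftree2} the children of $T'$ are exactly $C(T',1), C(T',2), \ldots, C(T', rpl(T')+1)$, so there are distinct indices $i \neq j$ with $T_1 = C(T', i)$ and $T_2 = C(T', j)$.

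I would then exhibit the delete-and-append step directly. First, remove the rightmost leaf of $T_1$; since $p(T_1) = T'$ by definition, the resulting tree is $T'$. Second, append a new leaf as the rightmost child of the vertex on $P_r(T')$ at level $j$; by the definition of the $C(\cdot,\cdot)$ notation this produces $C(T', j) = T_2$. Composing the two operations gives $T_2$ from $T_1$ by a single delete-and-append of a leaf.

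The only detail to verify is that the object removed from $T_1$ in the first step really is a leaf of $T_1$, and that the rightmost path of $T'$ is faithfully present inside $T_1$ so that after deletion we can address ``the vertex at level $j$ of $P_r(T')$.'' Both are immediate from the construction of $C(T',i)$: the extra leaf was appended as the rightmost child of a vertex on $P_r(T')$, so it is automatically a leaf sitting at the end of the rightmost path of $T_1$, and once it is stripped the remainder is exactly $T'$ with its rightmost path intact. No case analysis or further machinery is required; the statement is essentially bookkeeping built on top of Lemma \ref{le:ftree2} and the definition of $p(\cdot)$, so there is no real obstacle.
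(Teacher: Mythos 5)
Your proof is correct and follows exactly the same route as the paper's one-line argument: delete the rightmost leaf of one sibling to recover the common parent $T'$, then append the rightmost leaf at the appropriate level of $P_r(T')$ to obtain the other sibling. You have merely spelled out the bookkeeping (via the $C(T',i)$ parametrization of Lemma \ref{le:ftree2}) that the paper leaves implicit.
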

\begin{proof} 
Any ordered tree is derived from its sibling ordered tree
by deleting the rightmost leaf 
then appending a leaf as the rightmost leaf
at the suitable level.
\end{proof} 

In this paper we show that
by suitably defining the left-to-right order of child ordered trees
of each ordered tree in $F_n$,
we can define an ordered tree $F^O_n$
such that, for each $k$,
a Gray code for $S_k$ is appeared
as 
the left-to-right sequence of 
the ordered trees corresponding to 
the vertices at level $k$ of $F^O_n$.
Thus a Gray code for $S_n$ is appeared
as 
the left-to-right sequence of 
the ordered trees corresponding to 
the leaves of $F^O_n$.
See an example for $n=5$ in Fig. \ref{fig:ftree}.

\section{Algorithm}

In this section we design a Gray code for $S_k$
for each $k=1,2,\cdots,n$,
where $S_k$ is the set of the ordered trees
with exactly $k$ vertices.

\noindent
{\bf Induction on levels}
We proceed by induction on levels.
Let $F_k$ be the subtree of $F_n$
induced by 
$S_1\cup S_2\cup\cdots\cup S_k$.
The Gray code for $S_1$ is trivial and unique
since $|S_1|=1$.
Simillar for $S_2$ since $|S_2|=1$.
Assume that,
for an integer $k<n$,
we have
defined a left-to-right order of child ordered trees
of each ordered tree 
in $S_1\cup S_2\cup\cdots\cup S_{k-1}$,
we have obtained
an ordered tree $F^O_k$ corresponding to $F_k$,
and 
we have constructed a Gray code for $S_k$
as 
the left-to-right sequence of 
the ordered trees corresponding to the leaves of $F^O_k$.
Then
we are going to define
a left-to-right order of child ordered trees
of each ordered tree in $S_k$
so that
it extends $F^O_k$ to
an ordered tree $F^O_{k+1}$ and
a Gray code for $S_{k+1}$ is appeared
as 
the left-to-right sequence of 
the ordered trees at the leaves of $F^O_{k+1}$.

\noindent
{\bf Basic strategy of algorithm}
%
Let ($T_1,T_2,\cdots)$ be our Gray code for $S_k$.
We are going to define 
a left-to-right order of child ordered trees 
of each $T_i$ in $S_k$,
then we obtain a sequence of ordered trees,
 which is a Gray code for $S_{k+1}$, 
say $(T'_1, T'_2,\cdots )$.

%

If two consecutive ordered trees $T'_j$ and $T'_{j+1}$ in the sequence
are siblings in $F^O_{k+1}$,
then one can be derived from the other
by delete-and-append a leaf by Lemma \ref{le:ftree3}.
However if
two consecutive ordered trees $T'_j$ and $T'_{j+1}$
are not siblings in $F^O_{k+1}$,
that is,
$T'_j$ is the rightmost child ordered tree of $T_i$ and
$T'_{j+1}$ is the leftmost child ordered tree of $T_{i+1}$ 
for some $i$,
then we have several cases to consider. 
We have the following lemma 
for $T_i$ and $T_{i+1}$.

\begin{figure}[htb]
  \begin{center}
     \includegraphics[width=10cm,pagebox=cropbox,clip]{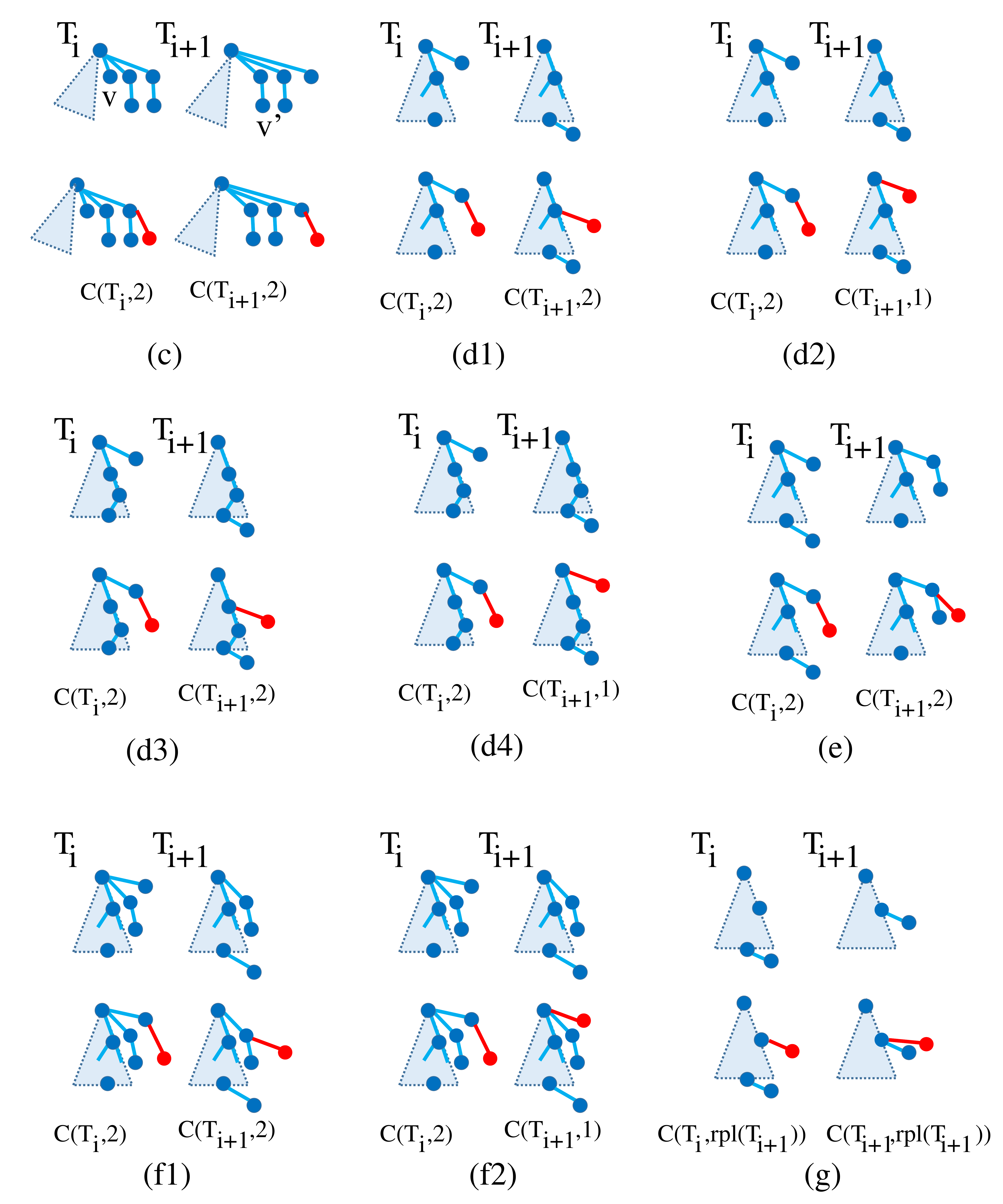}
  \end{center}
  \caption{
   Illustration for Lemma \ref{le:cases}.
  }
  \label{fig:cases}
\end{figure}

\begin{lemma}\label{le:cases}
Assume that $T_i$ can be derived from $T_{i+1}$ 
by delete-and-append a leaf.
Then the followings are hold.

\begin{enumerate}
\item[{\rm (a)}]
$C(T_i,1)$ can be derived from $C(T_{i+1},1)$ 
by delete-and-append a leaf.

\item[{\rm (b)}]
If $rpl(T_i)=rpl(T_{i+1})=1$, then
$C(T_i,2)$ can be derived from $C(T_{i+1},2)$ 
by delete-and-append a leaf.

\item[{\rm (c)}]
If
$rpl(T_i)$ has the pony-tail,
$rpl(T_{i+1})=1$, 
$T_{i}$ is copying $T_{i+1}$ at level $1$ and
$T_{i+1}$ is copying $T_i$ at level $2$, 
then
$C(T_i,2)$ can be derived from $C(T_{i+1},2)$ 
by delete-and-append a leaf.

\item[{\rm (d)}]
If
$rpl(T_i)=1$, $rpl(T_{i+1})>1$, and
$T_{i+1}$ has no pony-tail
(so $T_{i+1}$ is copying $T_i$ at level $1$), then
$C(T_i,2)$ can not be derived from $C(T_{i+1},2)$ 
by delete-and-append a leaf
(See Fig.\ref{fig:cases} (d1) and (d3)),
however
$C(T_i,2)$ can be derived from $C(T_{i+1},1)$ 
by delete-and-append a leaf.
(See Fig.\ref{fig:cases} (d2) and (d4).)


\item[{\rm (e)}]
If
$rpl(T_i)=1$, $rpl(T_{i+1})>1$, 
$T_{i+1}$ has the pony-tail, and
$T_{i}$ is copying $T_{i+1}$ at level $2$,
then
$C(T_i,2)$ can be derived from $C(T_{i+1},2)$ 
by delete-and-append a leaf.
(See Fig. \ref{fig:cases} (e).)

\item[{\rm (e')}]
If
$rpl(T_i)>1$, $rpl(T_{i+1})=1$, 
$T_{i}$ has the pony-tail, and
$T_{i+1}$ is copying $T_{i}$ at level $2$,
then
$C(T_{i+1},2)$ can be derived from $C(T_{i},2)$ 
by delete-and-append a leaf.

\item[{\rm(f)}]
If
$rpl(T_i)=1$, $rpl(T_{i+1})>1$, 
$T_{i+1}$ has the pony-tail,
$T_{i+1}$ is copying $T_i$ at level $1$,
then
$C(T_i,2)$ can not be derived from $C(T_{i+1},2)$ 
by delete-and-append a leaf
(See Fig.\ref{fig:cases} (f1)),  however
$C(T_i,2)$ can be derived from $C(T_{i+1},1)$ 
by delete-and-append a leaf.
(See Fig.\ref{fig:cases} (f2).)


\item[{\rm (g)}]
If
$rpl(T_i)\ge rpl(T_{i+1})\ge 2$, 
then
$C(T_i,rpl(T_{i+1}))$ can be derived from $C(T_{i+1},rpl(T_{i+1}))$ 
by delete-and-append a leaf.
(See Fig.\ref{fig:cases} (g).)

If 
$rpl(T_i)=rpl(T_{i+1})\ge 2$, 
then
$C(T_i,rpl(T_i))$ can be derived from $C(T_{i+1},rpl(T_i))$ 
by delete-and-append a leaf, and
$C(T_i,rpl(T_i)+1)$ can be derived from $C(T_{i+1},rpl(T_i)+1)$ 
by delete-and-append a leaf.

\item[{\rm (g')}]
If
$rpl(T_{i+1})\ge rpl(T_{i})\ge 2$, 
then
$C(T_{i+1},rpl(T_{i}))$ can be derived from $C(T_{i},rpl(T_{i}))$.
Also
if $rpl(T_{i+1}) > rpl(T_{i})\ge 2$,
then
$C(T_{i},1)$ can be derived from $C(T_{i+1},rpl(T_{i}))$ 
by delete-and-append a leaf.
\end{enumerate}

\end{lemma}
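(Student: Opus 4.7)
The plan is to verify each of cases (a)--(g') individually by exhibiting an explicit delete-and-append operation between the two specified child trees. As common setup, write the hypothesis as: $T_{i+1}$ is obtained from $T_i$ by deleting a leaf $\ell$ and appending a new leaf $\ell'$ as the rightmost child of some vertex $w$; thus $T_i$ and $T_{i+1}$ share a common core $T_i-\ell = T_{i+1}-\ell'$.

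I would start with case (a), which is the cleanest. Passing from $T$ to $C(T,1)$ appends a new leaf as a new rightmost child of the root, and crucially this operation does not change the leaf/non-leaf status of any preexisting vertex. So $\ell$ remains a leaf in $C(T_i,1)$ and $w$ still accepts a new rightmost child, hence the same delete-and-append that turns $T_i$ into $T_{i+1}$ also turns $C(T_i,1)$ into $C(T_{i+1},1)$. Cases (b), (c), (e), (e'), (g), and (g') admit the same ``lift'' template, now with the new leaf attached to a vertex $v$ at level $\ge 2$ of the rightmost path. The only possible obstruction would be the coincidence $v=\ell$, which would turn $\ell$ into a non-leaf in the child tree; the hypotheses of each case---the pony-tail condition, the copying-level equalities, or the inequality $rpl(T_i)\ge rpl(T_{i+1})\ge 2$---are precisely what rule this out. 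For each such case I would check that $\ell$ remains a leaf and $w$ remains a valid attachment site after the child's new leaf has been added, so that the delete-and-append lifts verbatim.

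The main obstacle is cases (d) and (f), where the direct lift genuinely fails. Here $rpl(T_i)=1<rpl(T_{i+1})$, so the vertex at level $2$ of $P_r(T_i)$ is the rightmost leaf of $T_i$, whereas the vertex at level $2$ of $P_r(T_{i+1})$ is an internal vertex; appending a new rightmost child at these structurally dissimilar vertices yields $C(T_i,2)$ and $C(T_{i+1},2)$ that differ in two leaves rather than one. I would dispatch the negative assertions of (d) and (f) by a direct comparison of the leaf-sets of the two trees, matching the pictures in Fig.~\ref{fig:cases}(d1), (d3), (f1).

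For the positive assertion, I would switch to comparing $C(T_i,2)$ with $C(T_{i+1},1)$. Both of these trees carry a new leaf at level $2$: the pony-tail hanging off the rightmost child of the root in $C(T_i,2)$, versus the newly-appended rightmost child of the root in $C(T_{i+1},1)$. The copying-at-level-$1$ hypothesis says exactly that $T_i$ differs from $T_{i+1}$ only at its rightmost child of the root, and in case (f) the pony-tail hypothesis of $T_{i+1}$ further pins down the remaining leaf. Using these, I can read off the single leaf to delete from $C(T_{i+1},1)$ and the single site at which to append, to obtain $C(T_i,2)$; this would be cross-checked against Fig.~\ref{fig:cases}(d2), (d4), (f2).
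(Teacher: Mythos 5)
Your ``lift template'' is the right first instinct, but your claim that the hypotheses of each case rule out the obstruction is not correct, and that is where the real content of the lemma lives. Take case (b) with $T_{i+1}$ the root with three leaf children $u,v,w$ and $T_i$ the root with two children of which the left has a single child: every witness of the delete-and-append appends a grandchild under a non-rightmost child of the root, but the witness you fix may delete the rightmost child $w$ of the root of $T_{i+1}$ --- and $w$ is exactly the vertex to which $C(T_{i+1},2)$ attaches its new leaf, so $w$ is no longer a leaf in $C(T_{i+1},2)$ and the verbatim lift is impossible. The conclusion of (b) still holds (delete $u$ instead and append under $v$), but your argument does not produce it: you must either show a collision-free witness can always be chosen or exhibit the alternative delete-and-append. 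This is precisely why the paper's proof of (a),(b) splits on whether the removed leaf is the rightmost leaf. The same collision between the deleted/appended leaf and the attachment vertex needs to be addressed in (e), (e$'$), (g), (g$'$) as well.

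The more serious gap is case (c), which you fold into the easy template. There the collision is unavoidable by construction: $T_i$ copying $T_{i+1}$ at level $1$ and $T_{i+1}$ copying $T_i$ at level $2$ both involve appending rightmost leaves, and the entire difficulty is to locate the two ``other'' deleted leaves $v$ (in $T_i$) and $v'$ (in $T_{i+1}$). The paper's proof of (c) is a genuine structural argument: a degree count at the root shows that exactly one of $v,v'$ is a child of the root, and a comparison of the subtrees hanging from the two roots then forces $v'$ to be the unique child of the child of the root corresponding to $v$; only with this configuration pinned down can one read off the single delete-and-append taking $C(T_{i+1},2)$ to $C(T_i,2)$. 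Nothing in your proposal supplies this analysis, so (c) is unproven. Your treatment of (d) and (f) --- the negative part by comparing the children of the rightmost child of the root, the positive part by pairing $C(T_i,2)$ with $C(T_{i+1},1)$ --- does match the paper's argument.
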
 
\begin{proof}
{(a)} 
{(b)} 
We have the following two cases. 
Case 1: $T_i$ is derived from $T_{i+1}$ by removing the rightmost leaf 
then appending a new leaf elsewhere.
Case 2: $T_i$ is derived from $T_{i+1}$ by removing a leaf 
which is not the rightmost leaf
then appending a new leaf elsewhere.
For both cases the claim holds.

\noindent
{(c)}
Assume that 
$T_{i+1}$ is derived from $T_{i}$ by
appending the rightmost leaf at level $1$
then 
deleting a leaf $v$
(since $T_i$ is copying $T_{i+1}$),
and
$T_{i}$ is derived from $T_{i+1}$ by
appending the rightmost leaf at level $2$
then
deleting a leaf $v'$
(since $T_{i+1}$ is copying $T_{i}$).

We can show that exactly one of $v$ or $v'$ is a child of the root, as follows.
If $v$ is a child of the root of $T_{i}$ and
$v'$ is a child of the root of $T_{i+1}$
then, since $T_{i}$ is copying $T_{i+1}$,
the degree of the root of $T_{i}$ is equal to the degree of the root of $T_{i+1}$,
and,
since $T_{i+1}$ is copying $T_{i}$,
the degree of the root of $T_{i+1}$ minus $1$ 
is equal to  the degree of the root of $T_{i}$,
a contradiction.
Also
if $v$ is not a child of the root of $T_{i}$ and
$v'$ is not a child of the root of $T_{i+1}$
then, since $T_{i}$ is copying $T_{i+1}$,
the degree of the root of $T_{i}$ plus $1$ is equal to 
the degree of the root of $T_{i+1}$,
and,
since $T_{i+1}$ is copying $T_{i}$,
the degree of the root of $T_{i+1}$ is the degree of the root of $T_{i}$,
a contradiction.
Thus exactly one of $v$ or $v'$ is a child of the root.

Assume first that $v$ is a child of the root of $T_i$.
Let $x_1,x_2,\cdots,x_d$ be the child vertices of the root in $T_{i}$ except $v$ in right-to-left order,
and $y_1,y_2,\cdots,y_{d+1}$ the child vertices of the root in $T_{i+1}$ 
in right-to-left order.
Since $T_{i}$ is copying $T_{i+1}$,
after removing $v$ from $T_{i}$,
the subtrees rooted at $x_1,x_2,\cdots, x_d$ are identical to 
the subtrees rooted at $y_2,y_3,\cdots, y_{d+1}$, respectively.
Also 
since $T_{i+1}$ is copying $T_{i}$,
after removing $v'$ from $T_{i+1}$,
the subtrees rooted at $y_2,y_3,\cdots, y_{d+1}$ except one 
(corresponding to the trivial subtree rooted at $v$) are identical to 
the subtrees rooted at $x_2,x_3,\cdots, x_d$, respectively.
If $v'$ belong to a subtree rooted at, say $y_j$, 
then, since $T_i$ is copying $T_{i+1}$,
the subtree rooted at $x_{j-1}$ is identical to the subtree rooted at $y_j$
and also, since $T_{i+1}$ is copying $T_i$,
after removing $v'$ from the subtree rooted at $y_j$,
if it is identical to the subtree rooted at $x_{j-1}$, then,
a contradiction.
Thus $v'$ belong to the subtree corresponding to the subtree rooted at $v$,
that is $v'$ is the only child 
of a child (corresponding to $v$) 
of the root.
See Fig.\ref{fig:cases} (c).
Now $C(T_i,2)$ is derived from $C(T_{i+1},2)$ by delete-and-append a leaf.

Simillar for the case 
where $v'$ is a child of the root of $T_{i+1}$.

\noindent
{(d)} 
Since $T_{i+1}$ has no pony-tail,
either (Case 1)
the rightmost child vertex of the root of $T_{i+1}$
has two or more child vertices
(See Fig.\ref{fig:cases} (d1)), or
(Case 2) 
the rightmost child vertex 
of the rightmost child vertex 
of the root of $T_{i+1}$
has one or more child vertices
(See Fig.\ref{fig:cases} (d3)).
Since $rpl(T_i)=1$
the rightmost child vertex of the root of $T_i$
has no child vertex.
For Case 1,
the rightmost child vertex of the root of $C(T_{i+1},2)$
has three or more child vertices,
while
the rightmost child vertex of the root of $C(T_i,2)$
has exactly one child vertex.
Thus $C(T_i,2)$ can not be derived from $C(T_{i+1},2)$
by delete-and-append a leaf.
See Fig.\ref{fig:cases} (d1).
For Case 2
we need to remove at least two vertices and append at least two vertices
to obtain $C(T_i,2)$ from $C(T_{i+1},2)$.
Thus $C(T_i,2)$ can not be derived from $C(T_{i+1},2)$
by delete-and-append a leaf.
See Fig.\ref{fig:cases} (d3).
However
$C(T_{i},2)$ can be derived from $C(T_{i+1},1)$
by delete-and-append a leaf.
See Fig.\ref{fig:cases} (d2) and (d4).

\noindent
{(e)} 
See Fig.\ref{le:cases} (e).

\noindent
{(e')} 
Similar to (e).

\noindent
{(f)}
See Fig.\ref{fig:cases} (f1) and (f2).

\noindent
{(g)}
See Fig.\ref{fig:cases} (g).

\noindent
{(g')} Similar to (g).
\end{proof}

\noindent
{\bf Step of algorithm}
Let $(T_1,T_2,\cdots )$ be a Gray code 
for $S_k$ corresponding to the leaves of $F^O_k$
and we are going to 
define a left-to-right order of child ordered trees 
of each ordered tree in $S_k$
and
construct a Gray code $(T'_1,T'_2,\cdots )$
for $S_{k+1}$ corresponding to the leaves of $F^O_{k+1}$.
When we start step $i$
assume that we have already defined 
the left-to-right order of the child ordered trees 
of $T_1,T_2,\cdots,T_{i-1}$ and
the leftmost child ordered tree of $T_i$,
and
in step $i$
we are going to define the left-to-right order of 
the child ordered trees of $T_i$ except the leftmost one,
and 
the leftmost child ordered trees of $T_{i+1}$.
See Fig.\ref{fig:step}.
The part we are going to define 
in the current step $i$ is depicted as a grey rectangle.
We proceed with several cases
based on $rpl(T_i),rpl(T_{i+1})$ and the leftmost child of $T_i$,
as explained later.
\begin{figure}[htb]
  \begin{center}
     \includegraphics[width=9cm,pagebox=cropbox,clip]{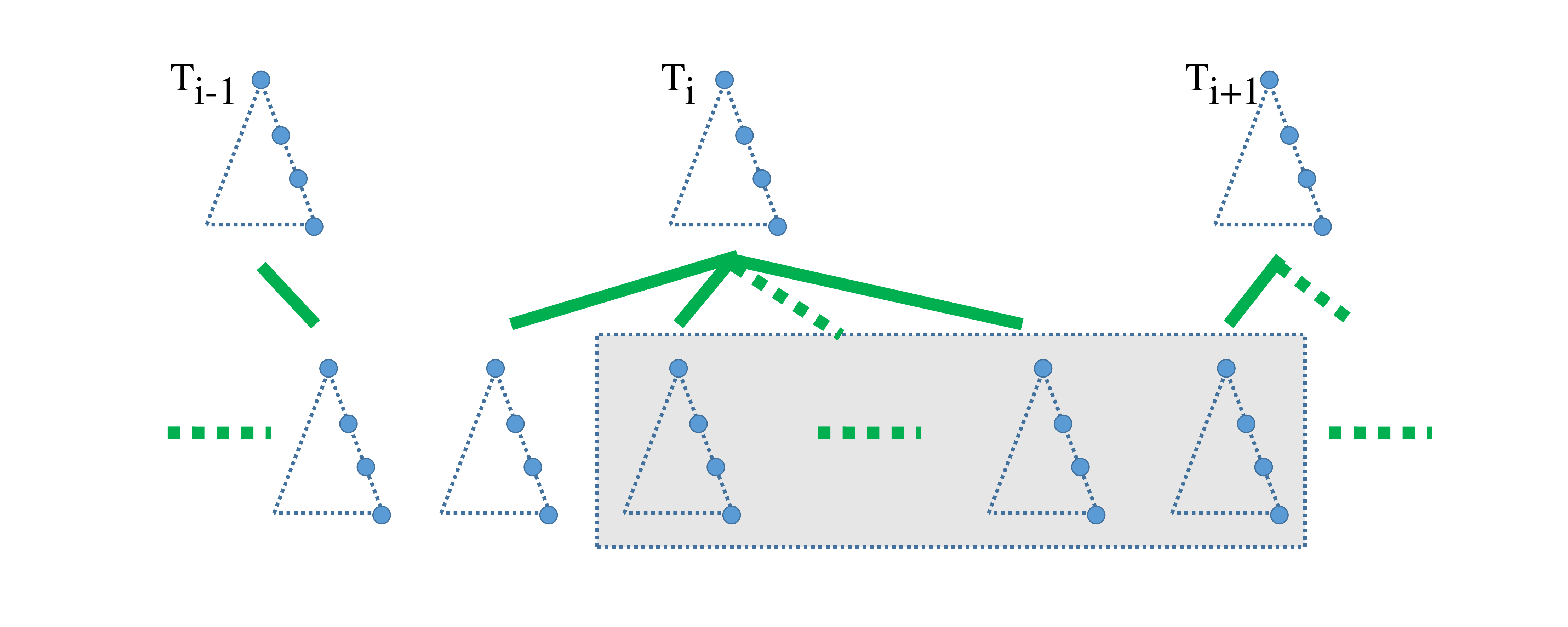}
  \end{center}
  \caption{
  An illustration for step $i$ of the algorithm.
  }
  \label{fig:step}
\end{figure}
%

\noindent
{\bf Loop invariants}
%

Our algorithm satisfies the following two conditions 
at each step $i$. (Note that (co1) is independent of $i$.)

\begin{enumerate}
\item[(co1)] 
For consecutive three ordered trees $T_{u-1},T_u,T_{u+1}$
at level $k$,
if 
$rpl(T_{u-1}) = rpl(T_{u+1}) = 1$ and
$rpl(T_{u}) >1 $
then 
$T_u$ has the pony-tail
and 
$T_{u+1}$ is copying $T_u$ at level $2$.
Also
if 
$rpl(T_{u-1}) = rpl(T_{u+1})\ge 2$ then
$rpl(T_{u-1}) > rpl(T_u) $.

\item[(co2)] 
For consecutive three ordered trees  $T'_{u'-1},T'_{u'},T'_{u'+1}$ at level $k+1$
with $u'+1\le i'$, 
where $T'_{i'}$ is the leftmost child ordered tree of $T_i$,
if
$rpl(T'_{u'-1}) = rpl(T'_{u'+1}) = 1$ and
$rpl(T'_{u'}) >1 $
then
$T'_{u'}$ has the pony-tail
and
$T'_{u+1}$ is copying $T'_u$ at level $2$.
Also
if 
$rpl(T'_{u'-1}) = rpl(T'_{u'+1})\ge 2$ then
$rpl(T'_{u'-1}) > rpl(T_u') $.

\end{enumerate}

\begin{figure}[htb]
  \begin{center}
     \includegraphics[width=14cm,pagebox=cropbox,clip]{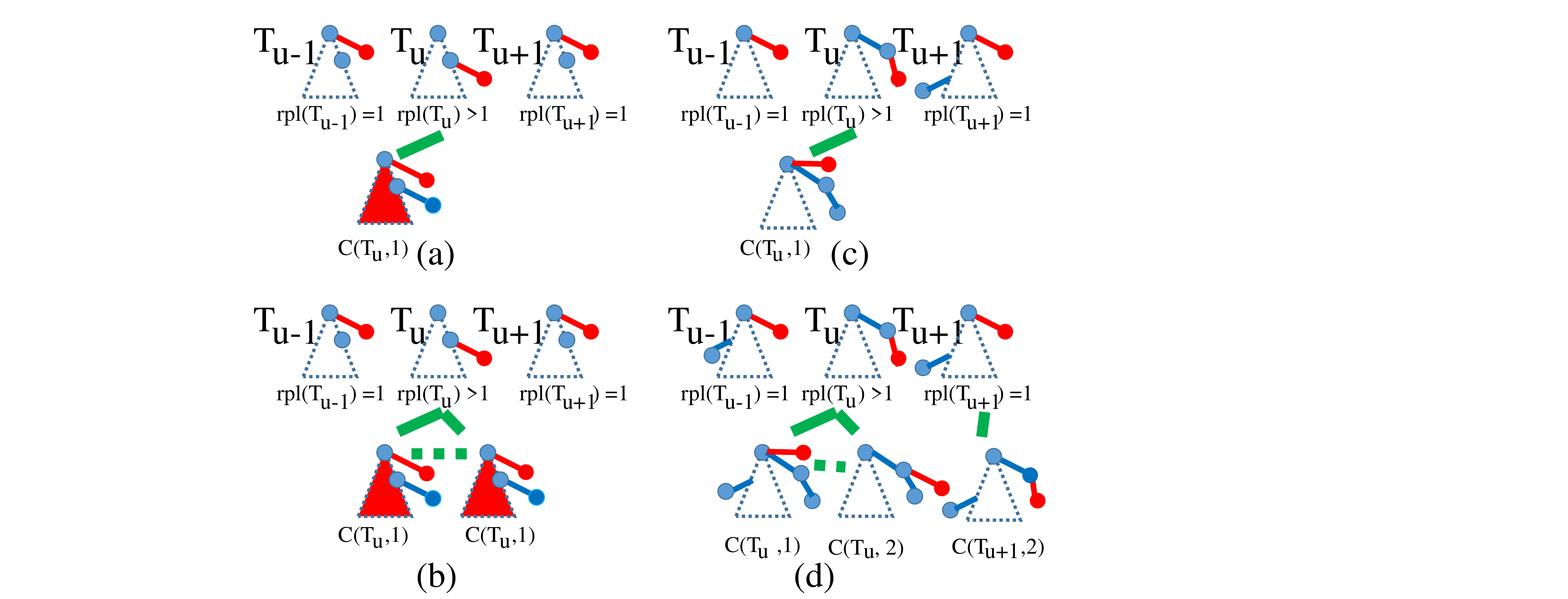}
  \end{center}
  \caption{
     Illustrations for the loop invariants.
  }
  \label{fig:invariant}
\end{figure}

The intuitive reason why we need those condition is as follows.

\noindent
Assume that  there are $T_{u-1},T_u,T_{u+1}$
with
$rpl(T_{u-1}) = rpl(T_{u+1}) = 1$,
$rpl(T_{u}) >1 $,
$T_u$ has no pony-tail, and
$C(T_u,1)$ is the leftmost child of $T_u$
(see Fig.\ref{fig:invariant}(a)),
and 
if we try to set $C(T_u,1)$
at the rightmost child of $T_u$,
then we  fail to construct a Gray code for $S_{k+1}$
since the same tree appear twice.
(See Fig.\ref{fig:invariant}(b).)
So our algorithm try to exclude any occurrence of
such consecutive three ordered trees.
Note that
even when 
$rpl(T_{u-1}) = rll(T_{u+1}) = 1$,
$rpl(T_{u}) >1 $ and
$C(T_u,1)$ is the leftmost child of $T_u$,
if $T_u$ has the pony-tail and 
$T_{u+1}$ is copying $T_u$
(see Fig.\ref{fig:invariant}(c)),
then
we can set $C(T_u,2)$
at the rightmost child of $T_i$ and
$C(T_{u+1},2)$ at the leftmost child of $T_{i+1}$
(by Lemma \ref{le:cases}(e'))
and we can proceed successfully.
(See an example in Fig.\ref{fig:invariant}(d).)

\vskip 3mm

\noindent
{\bf Algorithm}
First we 
set $C(T_1,1)$ as the leftmost child of $T_1$.

Assume that we have done each step $1,2,\cdots, i-1$.
Now we execute the next step $i$ of our algorithm
if $T_{i+1}$ exists.
(If $T_{i}$ is the last ordered tree
in the Gray code of $S_k$
then we order the remaining child of $T_{i}$ 
with decreasing order of $rpl$ from left to right.
See Fig. \ref{fig:ftree}.
Note that if $rpl(T_i)\ge 3$ 
then $C(T_i,1)$ never appear at the second leftmost child of $T_i$.)

We have the following four cases for step $i$.

\noindent
{\bf Case 1:} $rpl(T_i)=1$ and $rpl(T_{i+1})=1$.

\noindent
{\bf Case 1a:}
If $C(T_i,1)$ is the leftmost child of $T_i$
then 
we set $C(T_i,2)$  as the rightmost child of $T_i$
and $C(T_{i+1},2)$ as the lefttmost child of $T_{i+1}$
(by Lemma \ref{le:cases}(b)).

\noindent
{\bf Case 1b:}
Otherwise,
$C(T_i,1)$ is not the leftmost child of $T_i$
then 
we set $C(T_i,1)$  as the rightmost child of $T_i$
and $C(T_{i+1},1)$ as the lefttmost child of $T_{i+1}$
(by Lemma \ref{le:cases}(a)).

\noindent
{\bf Case 2:} $rpl(T_i) = 1$ and  $rpl(T_{i+1})>1$.

We have two subcases.

\noindent
{\bf Case 2a:} $T_{i+1}$ has no pony-tail.
(So $T_{i+1}$ is copying $T_i$.)

\noindent
{\bf Case 2a1:}
If $C(T_i,1)$ is the leftmost child of $T_i$
then 
we set $C(T_i,2)$  as the rightmost child of $T_i$
and $C(T_{i+1},1)$ as the leftmost child of $T_{i+1}$
(by Lemma \ref{le:cases}(d)).

\noindent
{\bf Case 2a2:}
If $C(T_i,1)$ is not the leftmost child of $T_i$
then
we set $C(T_i,1)$  as the rightmost child of $T_i$
and $C(T_{i+1},1)$ as the leftmost child of $T_{i+1}$
(by Lemma \ref{le:cases}(a)).

\noindent
{\bf Case 2b:} $T_{i+1}$ has the pony-tail and 
$T_i$ is copying $T_{i+1}$.

\noindent
{\bf Case 2b1:}
If $C(T_i,1)$ is the leftmost child of $T_i$
then 
we set $C(T_i,2)$  as the rightmost child of $T_i$
and $C(T_{i+1},2)$ as the leftmost child of $T_{i+1}$
(by Lemma \ref{le:cases}(e)).

\noindent
{\bf Case 2b2:}
If $C(T_i,1)$ is not the leftmost child of $T_i$
then
we set $C(T_i,1)$  as the rightmost child of $T_i$
and $C(T_{i+1},1)$ as the leftmost child of $T_{i+1}$
(by Lemma \ref{le:cases}(a)).

\noindent
{\bf Case 2c:} $T_{i+1}$ has the pony-tail and 
$T_{i+1}$ is copying $T_{i}$.

\noindent
{\bf Case 2c1:}
If $C(T_i,1)$ is the leftmost child of $T_i$
then 
we set $C(T_i,2)$  as the rightmost child of $T_i$
and $C(T_{i+1},1)$ as the leftmost child of $T_{i+1}$
(by Lemma \ref{le:cases}(f)).

\noindent
{\bf Case 2c2:}
If $C(T_i,1)$ is not the leftmost child of $T_i$
then
we set $C(T_i,1)$  as the rightmost child of $T_i$
and $C(T_{i+1},1)$ as the leftmost child of $T_{i+1}$
(by Lemma \ref{le:cases}(a)).

\noindent
{\bf Case 3:} $rpl(T_i) > 1$ and  $rpl(T_{i+1})=1$.

We have two subcases.

\noindent
{\bf Case 3a:} $T_{i}$ has no pony-tail.
(So $T_i$ is copying $T_{i+1}$.)

\noindent
{\bf Case 3a1:}
If $C(T_i,1)$ is the leftmost child of $T_i$
then we can prove that this case never occur, as follows.

We have set $C(T_i,1)$ as the leftmost child of $T_i$
with $rpl(T_i)>1$
in the preceding step of
either Case 2a1, 2a2, 2b2, 2c1 or 2c2.
In those cases $rpl(T_{i-1})=1$ holds,
and in Case 3a1 
$rpl(T_i)>1$ and $rpl(T_{i+1})=1$ hold and
$T_i$ has no pony-tail.
This contradicts to (co1).

\noindent
{\bf Case 3a2:}
If $C(T_i,1)$ is not the leftmost child of $T_i$
then
we set $C(T_i,1)$  as the rightmost child of $T_i$
and $C(T_{i+1},1)$ as the leftmost child of $T_{i+1}$
(by Lemma \ref{le:cases}(a)).
Set other child ordered trees of $T_i$ between 
the leftmost child and the rightmost child 
with decreasing order of $rpl$ from left to right.

\noindent
{\bf Case 3b:} $T_{i}$ has the pony-tail and
$T_{i+1}$ is copying $T_i$.

\noindent
{\bf Case 3b1:}
If $C(T_i,1)$ is the leftmost child of $T_i$
then 
we set $C(T_i,2)$  as the rightmost child of $T_i$
and $C(T_{i+1},2)$ as the leftmost child of $T_{i+1}$
(by Lemma \ref{le:cases}(e')).
Set the remaining child $C(T_{i},3)$ of $T_i$ as the middle child of $T_i$.

\noindent
{\bf Case 3b2:}
If $C(T_i,1)$ is not the leftmost child of $T_i$
then
we set $C(T_i,1)$  as the rightmost child of $T_i$
and $C(T_{i+1},1)$ as the leftmost child of $T_{i+1}$
(by Lemma \ref{le:cases}(a)).
Set the remaining child 
as the middle child of $T_i$.

\noindent
{\bf Case 3c:} $T_{i}$ has the pony-tail and
$T_i$ is copying $T_{i+1}$.

\noindent
{\bf Case 3c1:} 
$C(T_i,1)$ is the leftmost child of $T_i$.
If $T_{i+1}$ is also copying $T_{i}$
then 
we set $C(T_i,2)$  as the rightmost child of $T_i$
and $C(T_{i+1},2)$ as the leftmost child of $T_{i+1}$
(by Lemma \ref{le:cases}(c)) and
set the remaining child 
as the middle child of $T_i$.
Otherwise
one can prove that 
this case never occur.
Similar to Case 3a1.

\noindent
{\bf Case 3c2:}
If $C(T_i,1)$ is not the leftmost child of $T_i$
then
we set $C(T_i,1)$  as the rightmost child of $T_i$
and $C(T_{i+1},1)$ as the leftmost child of $T_{i+1}$
(by Lemma \ref{le:cases}(a)).
Set the remaining child 
as the middle child of $T_i$

\noindent
{\bf Case 4:} $rpl(T_i) > 1$ and  $rpl(T_{i+1})>1$.

\noindent
{\bf Case 4a:}
$C(T_i,1)$ is the leftmost child of $T_i$.

\noindent
{\bf Case 4a1:}
$rpl(T_i) \le rpl(T_{i+1})$.

We set $C(T_i,rpl(T_{i}))$  as the rightmost child of $T_i$
and $C(T_{i+1},rpl(T_{i}))$ as the leftmost child of $T_{i+1}$
(by Lemma \ref{le:cases}(g')).

Set other child ordered trees of $T_i$ between 
the leftmost child $C(T_i,1)$ and 
the rightmost child $C(T_i,rpl(T_i))$
with increasing order of $rpl$ from left to right.

\noindent
{\bf Case 4a2:}
$rpl(T_i) > rpl(T_{i+1})$.

We set $C(T_i,rpl(T_{i+1}))$  as the rightmost child of $T_i$
and $C(T_{i+1},rpl(T_{i+1}))$ as the leftmost child of $T_{i+1}$
(by Lemma \ref{le:cases}(g)).

Set other child ordered trees of $T_i$ between 
the leftmost child $C(T_i,1)$ and 
the rightmost child $C(T_i,rpl(T_{i+1}))$
with increasing order of $rpl$ from left to right.

\noindent
{\bf Case 4b:}
$C(T_i,1)$ is not the leftmost child of $T_i$.

Let $T$ be the leftmost child of $T_i$.

\noindent
{\bf Case 4b1:}
$rpl(T_i) \le rpl(T_{i+1})$.

If $rpl(T_i) < rpl(T_{i+1})$
then
we set $C(T_i,1)$  as the rightmost child of $T_i$
and $C(T_{i+1},rpl(T_i))$ as the leftmost child of $T_{i+1}$
(by Lemma \ref{le:cases}(g')).

Otherwise
$rpl(T_i) = rpl(T_{i+1})$ holds.
If $rpl(T)=rpl(T_i)$
then
we set $C(T_i,rpl(T_i)+1)$  as the rightmost child of $T_i$
and $C(T_{i+1},rpl(T_i)+1)$ as the leftmost child of $T_{i+1}$,
and if $rpl(T)\ne rpl(T_i)$
then
we set $C(T_i,rpl(T_i))$  as the rightmost child of $T_i$
and $C(T_{i+1},rpl(T_i))$ as the leftmost child of $T_{i+1}$
(by Lemma \ref{le:cases}(g)).

Set other child ordered trees of $T_i$ between 
the leftmost child  and 
the rightmost child 
with decreasing order of $rpl$ from left to right.
(Note that if $rpl(T_i)\ge 3$
then
$C(T_i,1)$ never appear at the second leftmost child of $T_i$.)

\noindent
{\bf Case 4b2:}
$rpl(T_i) > rpl(T_{i+1})$ and 
$rpl(T)\ne rpl(T_{i+1})$.

We set $C(T_i,rpl(T_{i+1}))$  as the rightmost child of $T_i$
and $C(T_{i+1},rpl(T_{i+1}))$ as the leftmost child of $T_{i+1}$
(by Lemma \ref{le:cases}(g)).
Set other child ordered trees of $T_i$ between 
the leftmost child and the rightmost child  
with decreasing order of $rpl$ from left to right.
(Note that $C(T_i,1)$ never appear at the second leftmost child of $T_i$
since $rpl(T_i)\ge 3$ holds.)

\noindent
{\bf Case 4b3:}
$rpl(T_i) > rpl(T_{i+1})$ and 
$rpl(T)= rpl(T_{i+1})$.

We show this case never occur 
in the lemma below.

\vskip 4mm

The description of the four cases 
for step $i$ is completed.

We have the following three lemmas.

\begin{lemma}\label{le:casenotoccur}
{\rm Case 4b3} never occur.
\end{lemma}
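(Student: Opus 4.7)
The plan is to argue by contradiction by backtracking to step $i-1$ and analyzing how the leftmost child $T$ of $T_i$ could have been assigned there. The key quantity to track is $rpl(T)$, because Case 4b3 forces $rpl(T)=rpl(T_{i+1})<rpl(T_i)$. Since the algorithm initializes $C(T_1,1)$ as the leftmost child of $T_1$, being in Case 4b (where $T\ne C(T_i,1)$) guarantees $i\ge 2$, so step $i-1$ is available to examine.

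First I would tabulate, for each subcase of step $i-1$, what level the leftmost child of $T_i$ is set to. Because $rpl(T_i)>1$, Cases 1 and 3 of step $i-1$ are immediately excluded (they require $rpl(T_i)=1$). Within Cases 2 and 4 of step $i-1$, any subcase that sets the leftmost child of $T_i$ equal to $C(T_i,1)$, namely 2a1, 2a2, 2b2, 2c1, 2c2, is excluded by our Case 4b. The surviving possibilities place $T$ at a level in $\{2,\,rpl(T_{i-1}),\,rpl(T_i),\,rpl(T_i)+1\}$ coming from Case 2b1, Cases 4a1 and 4b1 first subvariant, Cases 4a2 and 4b2 and 4b1 third subvariant, and Case 4b1 second subvariant, respectively.

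Next I impose $rpl(T)<rpl(T_i)$. This rules out the assignments at levels $rpl(T_i)$ and $rpl(T_i)+1$, hence Cases 4a2, 4b2, and both of the 4b1 second/third subvariants. Three scenarios remain: (a) step $i-1$ was Case 2b1, so $rpl(T)=2$ and $T_i$ has the pony-tail; (b) step $i-1$ was Case 4a1 with $rpl(T_{i-1})<rpl(T_i)$; (c) step $i-1$ was the first subvariant of Case 4b1, which also forces $rpl(T_{i-1})<rpl(T_i)$. In (b) and (c) we have $rpl(T_{i-1})=rpl(T_{i+1})\ge 2$, so loop invariant (co1) at $u=i$ yields $rpl(T_{i-1})>rpl(T_i)$, contradicting $rpl(T_{i-1})<rpl(T_i)$.

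The final step disposes of scenario (a): the pony-tail hypothesis forces the rightmost child of the root of $T_i$ to have exactly one child, which is a leaf, so the rightmost path of $T_i$ has exactly two edges, giving $rpl(T_i)=2$. But Case 4b3 requires $rpl(T_i)>rpl(T_{i+1})\ge 2$, hence $rpl(T_i)\ge 3$, a contradiction. Hence Case 4b3 cannot occur. The main obstacle is the bookkeeping — correctly matching every subcase of step $i-1$ to the level it assigns to the leftmost child of $T_i$; once that table is in hand, invariant (co1) and the pony-tail observation close the argument cleanly.
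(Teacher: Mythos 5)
Your proof is correct and follows essentially the same route as the paper: backtrack to step $i-1$, enumerate which subcases could have placed the leftmost child $T$ of $T_i$, and eliminate each surviving possibility using the constraint $rpl(T)=rpl(T_{i+1})<rpl(T_i)$, invariant (co1) at $u=i$, and the observation that the pony-tail case forces $rpl(T_i)=2$. The only difference is organizational (you tabulate by the level assigned to $T$, while the paper splits on $rpl(T)>2$ versus $rpl(T)=2$), and you add the useful explicit remark that Case 4b guarantees $i\ge 2$.
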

\begin{proof}
Assume for a contradiction
that the case occurs.
(In Case 4b we have defined $T$ as the leftmost child of $T_i$.)

If
$rpl(T) > 2$,
then we have set $T$ in Case 4 of the preceding setp $i-1$.
If $rpl(T_{i-1}) \le rpl(T_{i})$ and 
we set $C(T_i, rpl(T_{i-1}))$ as $T$
in either Case 4a1 or Case 4b1,
then
$rpl(T_{i-1}) = rpl(T) = rpl(T_{i+1}) < rpl(T_i)$ holds,
which contradicts to {\rm (co1)}.
Otherwise if
$rpl(T_{i-1}) = rpl(T_{i})$ and 
we set $C(T_i, rpl(T_{i}))$ or
$C(T_i, rpl(T_{i})+1)$
as $T$
in Case 4b1,
then it contradict to the condition
$rpl(T_i) > rpl(T_{i+1})$ and
$rpl(T) = rpl(T_{i+1})$
of Case 4b3.
Otherwise,
$rpl(T_{i-1}) > rpl(T_{i})$ holds,
then we set $C(T_i, rpl(T_{i}))$ as $T$
in either Case 4a2 or Case 4b2,
so 
$rpl(T) = rpl(T_{i})$ holds, 
which contradicts to Case 4b3.

If
$rpl(T)=2$,
then we set  $T$ in either Case 2b1, 4a1, 4a2, 4b1 or 4b2
of the preceding step $i-1$.
If we set $T$ in Case 2b1 then
$T_{i}$ has the pony-tail and $rpl(T_i)=2$,
which contradicts to $rpl(T_i) > rpl(T_{i+1})>1$.
If we set $T$ in Case 4a1 or Case 4b1 then 
$rpl(T_{i-1})=rpl(T)=rpl(T_{i+1})<rpl(T_i)$,
which contradicts to {\rm (co1)}.
If we set $T$ in either Case 4a2 or Case 4b2 then 
$rpl(T_{i-1})>rpl(T_{i})=rpl(T) $
which contradicts
to Case 4b3.
\end{proof}

\begin{lemma}\label{le:copy}
(a)
If $rpl(T)=1$, $T'$ has no pony-tail
and $T'$ is copying $T$,
then
$C(T',1)$ is copying $C(T,2)$.

\noindent
(b)
If $rpl(T)=1$, $T'$ has the pony-tail and
$T'$ is copying $T$,
then
$C(T',1)$ is copying $C(T,2)$.


\end{lemma}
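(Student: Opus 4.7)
The plan is to unwind the definition of ``copying'' and exhibit a single concrete append-then-delete sequence witnessing that $C(T',1)$ is copying $C(T,2)$. First I would invoke the hypothesis that $T'$ is copying $T$: because $rpl(T)=1$, the newly appended leaf must sit at level $2$ as a child of the root, so there are a vertex $r$ and a leaf $v\neq r$ of $T'$ such that $T$ is obtained from $T'$ by appending $r$ as the new rightmost child of the root and then deleting $v$. I would keep track of this particular $v$, since re-deleting the ``same'' $v$ one level down is the key move.

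Next, I would write $C(T',1)$ and $C(T,2)$ out concretely. The tree $C(T',1)$ is $T'$ with a fresh leaf $w_1$ attached as the new rightmost child of the root, so its rightmost path has length one. For $C(T,2)$, the vertex at level $2$ on $P_r(T)$ is the rightmost leaf $r$ itself (here we use $rpl(T)=1$), so $C(T,2)$ is $T$ with a fresh leaf $w_2$ hung beneath $r$. Informally, $C(T,2)$ equals ``$T'$ with $v$ deleted, together with a new rightmost child of the root that carries a single child leaf.''

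I would then execute the copy: append a new leaf $w_2'$ as the rightmost child of $w_1$ in $C(T',1)$, and afterward delete $v$. The new leaf $w_2'$ lands at level $3$ on the rightmost path of the intermediate tree, so it becomes the rightmost leaf, matching the position of $w_2$ in $C(T,2)$. Since neither $w_1$ nor $w_2'$ is incident to $v$, the vertex $v$ remains a leaf in the intermediate tree and so can legitimately be removed. Identifying the position of $w_1$ with that of $r$ and of $w_2'$ with that of $w_2$, the resulting ordered tree coincides with $C(T,2)$, which proves the claim.

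Finally, I would point out that this construction never touches the pony-tail status of $T'$: the portion of $T'$ outside the surgery at $r$ and $v$ is transported through unchanged, and the local modification on the rightmost spine is the same in both cases. Hence the same argument delivers both (a) and (b), and the only real work is the bookkeeping of which vertex in $C(T',1)+w_2'$ plays the role of which vertex in $C(T,2)$; I do not foresee a substantive obstacle beyond this.
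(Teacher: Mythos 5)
Your proof is correct and is essentially the explicit version of what the paper conveys: the paper's own proof of this lemma is only the one\nobreakdash-line sketch ``See Fig.~\ref{fig:copy}'', and your witness --- append a new leaf $w_2'$ below $w_1$ in $C(T',1)$, then delete the same leaf $v$ that was deleted when $T'$ copied $T$ --- is exactly what that figure depicts. Your observation that the pony-tail hypothesis is never used, so that (a) and (b) admit one uniform argument, is also accurate.
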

\begin{proof} 
(Sketch.) See Fig. \ref{fig:copy}.
\end{proof}
We need above lemma in the proof of the next lemma.

\begin{figure}[htb]
  \begin{center}
     \includegraphics[width=14cm,pagebox=cropbox,clip]{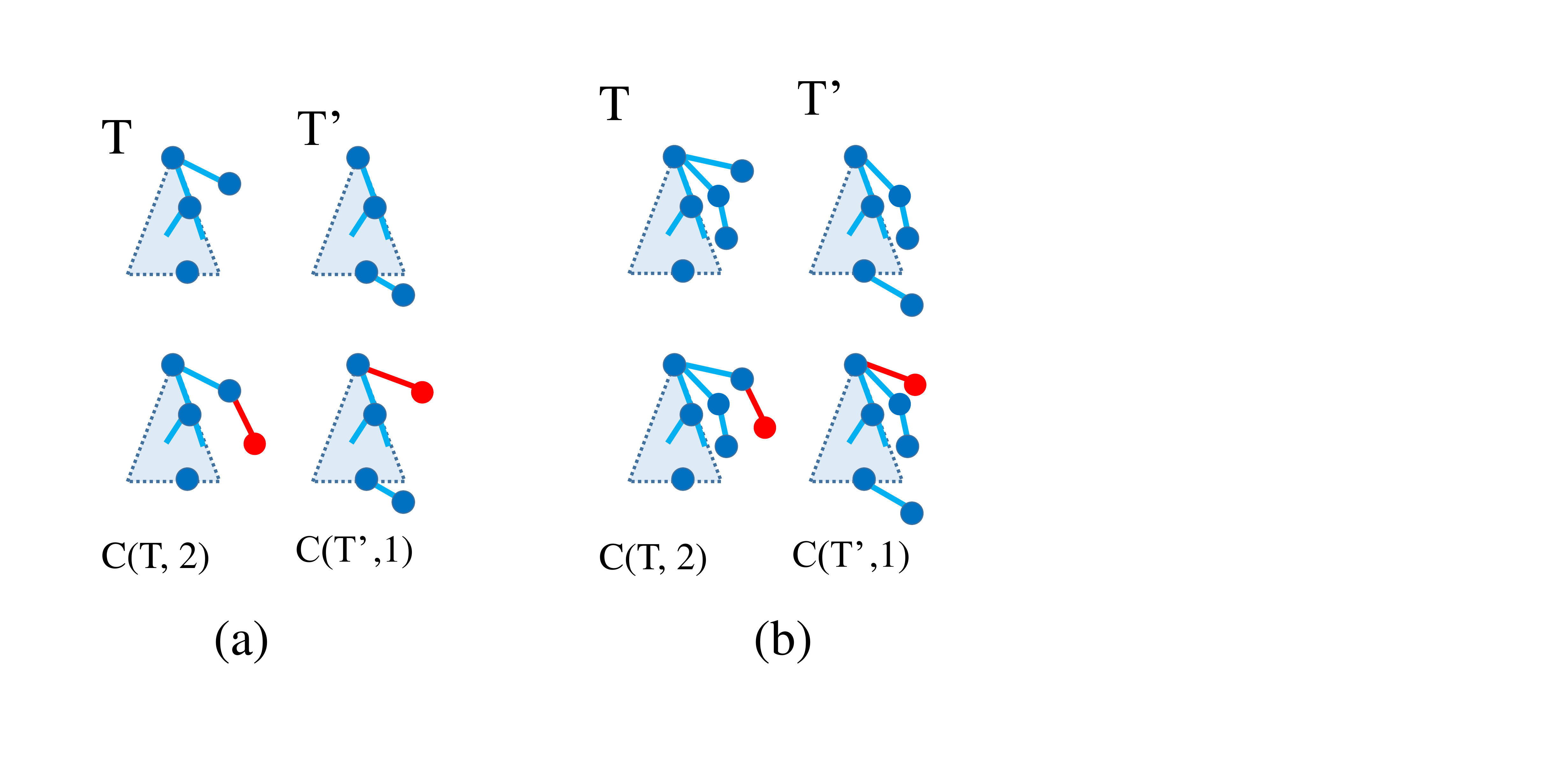}
  \end{center}
  \caption{
     Illustrations for Lemma \ref{le:copy}.  }
  \label{fig:copy}
\end{figure}


\begin{lemma}\label{le:inv}
Assume that {\rm (co1)} is satisfied.
If {\rm (co2)} is satisfied for $i=1,2,\cdots,s$
then, after executing step $i=s$,
{\rm (co2)} is satisfied for $i=s+1$. 
\end{lemma}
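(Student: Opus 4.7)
The trees at level $k+1$ newly defined by step $s$ are the non-leftmost children of $T_s$ together with the leftmost child of $T_{s+1}$. Writing $i'_s$ and $j'_s = i'_{s+1} - 1$ for the indices of the leftmost and rightmost children of $T_s$, the triples $(T'_{u'-1}, T'_{u'}, T'_{u'+1})$ that are in-scope for (co2) at $i=s+1$ but were not in-scope at $i=s$ are exactly those with middle index $u' \in \{i'_s, \ldots, j'_s\}$. For any $u'$ with $i'_s < u' < j'_s$ all three trees are consecutive children of $T_s$, and by Lemma~\ref{le:ftree2} the children of $T_s$ carry the pairwise distinct $rpl$-values $1,2,\ldots,rpl(T_s)+1$; in particular $rpl(T'_{u'-1}) \neq rpl(T'_{u'+1})$, so both clauses of (co2) hold vacuously. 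The work thus reduces to two boundary triples: \emph{type (A)}, at $u'=j'_s$, consisting of the second-rightmost child of $T_s$, the rightmost child of $T_s$, and the leftmost child of $T_{s+1}$; and, for $s\ge 2$, \emph{type (B)}, at $u'=i'_s$, consisting of the rightmost child of $T_{s-1}$ and the two leftmost children of $T_s$.

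For type (A), I would go through Cases~1a--4b2 in turn, reading the three $rpl$-values directly from the algorithm's explicit prescription. In most cases the outer $rpl$-values disagree and (co2) is vacuous. The one genuinely nontrivial pattern is the $rpl$-sequence $(1,2,1)$, which arises precisely in Cases~2a1 and 2c1: (co2)'s first clause then demands that $T'_{u'}=C(T_s,2)$ has the pony-tail and that $T'_{u'+1}=C(T_{s+1},1)$ is copying $C(T_s,2)$ at level~$2$. The pony-tail assertion follows immediately from $rpl(T_s)=1$, because then the vertex $v_1$ on $P_r(T_s)$ is a leaf of $T_s$ and in $C(T_s,2)$ its only child is the newly appended leaf. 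The copying assertion is exactly Lemma~\ref{le:copy}(a) in Case~2a1 and Lemma~\ref{le:copy}(b) in Case~2c1, whose hypotheses match the defining conditions of those subcases verbatim. The second clause of (co2), activated only within the Case~4 family, reduces to an $rpl$-inequality immediate from how the algorithm orders the middle children.

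For type (B) the first two entries are set by step $s-1$ and the third is the second-leftmost child of $T_s$ placed by step $s$. The plan is to run through each ``step $s-1$ case $\times$ step $s$ case'' pairing, read the three $rpl$-values, and apply the same two tools: (co1) applied to the triple $(T_{s-1},T_s,T_{s+1})$ at level $k$ constrains which pairings are possible, and when (co2)'s first clause fires, Lemma~\ref{le:copy} lifts the pony-tail/copying pattern from level $k$ to level $k+1$; when its second clause fires, the strict $rpl$-decrease is immediate from the ordering rules used by the algorithm.

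The main obstacle will be the bookkeeping in type (B), since the pair (rightmost child of $T_{s-1}$, leftmost child of $T_s$) is entangled with the subcase chosen by step $s-1$ and a full table of ``step $s-1$ case $\times$ step $s$ case'' pairings must be inspected. The mitigating observation is that (co1) rules out a large fraction of these pairings at a stroke, and in every surviving pairing exactly one of Lemma~\ref{le:copy}(a) or (b) applies and closes the argument in the same way as for type (A).
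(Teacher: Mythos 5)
Your decomposition is essentially the paper's: the interior triples (three consecutive children of $T_s$) are dismissed because siblings carry pairwise distinct $rpl$-values, and the real work sits in the two boundary triples, which are the paper's Case~1 (your type~(B)) and Case~3 (your type~(A), stated in the paper one step earlier as the triple straddling $T_{s-1}$ and $T_s$). Your treatment of type~(A) matches the paper exactly — the only dangerous pattern is $(1,>1,1)$, it arises only out of Cases~2a1 and~2c1, the pony-tail of $C(T_s,2)$ is immediate from $rpl(T_s)=1$, and the copying clause is Lemma~\ref{le:copy}(a)/(b) — and your handling of the second clause of (co2) agrees with the paper's observation that no case places a rightmost-child/leftmost-child pair with $rpl(T'_{u'})>rpl(T'_{u'+1})\ge 2$. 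The one place you would run into trouble is type~(B): you predict that ``in every surviving pairing exactly one of Lemma~\ref{le:copy}(a) or (b) applies and closes the argument,'' but that lemma concerns $C(T',1)$ copying $C(T,2)$ for consecutive trees $T,T'$ of the level-$k$ code, whereas in a type~(B) triple the middle and right entries are \emph{siblings} (the leftmost and second-leftmost children of $T_s$), so the lemma has nothing to say there. The paper's resolution is different and simpler: the pattern $(1,>1,1)$ cannot occur at that boundary at all, because the only case producing $rpl(T'_{u'-1})=1<rpl(T'_{u'})$ across the $T_{s-1}/T_s$ seam is Case~4b1 with $rpl(T_{s-1})<rpl(T_s)$, which forces $rpl(T_s)\ge 3$, and then no case ever places $C(T_s,1)$ as the second-leftmost child of $T_s$. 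Your exhaustive ``step $s-1$ case $\times$ step $s$ case'' table would eventually surface this fact, so the plan is recoverable, but the closing mechanism you announce for type~(B) is the wrong one.
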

\begin{proof} 
{\rm\bf First part of (co2)}
We have the following three cases to consider.
For each case we can prove {\rm (co2)} is satisfied for $i=s+1$,
as follows.

\noindent{\bf Case 1:}
$T'_{u'-1}$ is the rightmost child of $T_{s-1}$,
$T'_{u'}$ is the lefhtmost child of $T_{s}$ and 
$T'_{u'+1}$ is the second lefhtmost child of $T_{s}$.

If those three ordered trees violate {\rm (co2)}
then $rpl(T'_{u'-1}) = rpl(T'_{u'+1}) =1 < rpl(T'_{u'})$ holds.

Only Case 4b1 
with $rpl(T_{s-1})<rpl(T_s)$
sets $T'_{u'-1}$ and $T'_{u'}$
so that 
$rpl(T'_{u'-1})=1 < rpl(T'_{u'})$.
If so  $rpl(T_s)\ge 3$ holds.
However no case set 
(the second leftmost child of $T_s$)
$T'_{u'+1}$ with $rpl(T'_{u'+1})=1$
since if $rpl(T_s)\ge 3$ 
then no case set 
$C(T_s,1)$ as the second leftmost child of $T_s$.
Thus {\rm (co2)} is satisfied.

\noindent{\bf Case 2:}
$T'_{u'-1}$,
$T'_{u'}$ and 
$T'_{u'+1}$ are children of $T_{s}$.

Those three ordered trees never violate {\rm (co2)}
since they are children of $T_{s}$ and have distinct $rpl$'s.


\noindent{\bf Case 3:}
$T'_{u'-1}$ is the second rightmost child of $T_{s-1}$,
$T'_{u'}$ is the rightmost child of $T_{s-1}$ and 
$T'_{u'+1}$ is the leftmost child of $T_{s}$.

If those three ordered trees violate {\rm (co2)}
then $rpl(T'_{u'-1}) = rpl(T'_{u'+1}) =1 < rpl(T'_{u'})$ holds.
This occurs only 
when we set $T'_{u'}$ and $T'_{u'+1}$
in either Case 2a1 or Case 2c1.
For those cases $rpl(T_{s-1})=1$ holds,
and
$rpl(T'_{u'-1}) = rpl(T'_{u'+1})=1$,
$T'_{u'}$ has the pony-tail and 
$T'_{u'+1}$ is copying $T'_{u'}$
by Lemma \ref{le:copy}(a) and (b).
Thus {\rm (co2)} is satisfied.

\vskip 4mm

\noindent
{\rm\bf Second part of (co2)}
%
If $T'_{u'-1}, T'_{u'}$ and $T'_{u'+1}$ are siblings,
since each child ordered tree has a distinct $rpl$,
the claim is satisfied.
So assume otherwise, that is
$T'_{u'-1}$ and $T'_{u'+1}$ are not siblings.
We have the following two cases.

\noindent{\bf Case 1:}
$T'_{u'}$ and $T'_{u'+1}$ are not siblings.

Now $T'_{u'-1}$ and $T'_{u'}$ are siblings.
If $T'_{u'-1},T'_{u'},T'_{u'+1}$ violate (co2)
then $2\le rpl(T'_{u'-1})<rpl(T'_{u'})$ and
$rpl(T'_{u'})>rpl(T'_{u'+1})\ge 2$ hold.
No case set $T'_{u'}$ and $T'_{u'+1}$
with $rpl(T'_{u'})>rpl(T'_{u'+1})\ge 2$.
Thus this case never occur.

\noindent{\bf Case 2:}
$T'_{u'-1}$ and $T'_{u'}$ are not siblings.

Now $T'_{u'}$ and $T'_{u'+1}$ are siblings.
If $T'_{u'-1},T'_{u'},T'_{u'+1}$ violate (co2)
then $2\le rpl(T'_{u'+1})<rpl(T'_{u'})$ and
$rpl(T'_{u'})>rpl(T'_{u'-1})\ge 2$ hold.
No case set $T'_{u'-1}$ and $T'_{u'}$
with $rpl(T'_{u'})>rpl(T'_{u'-1})\ge 2$.
Thus this case never occur.
\end{proof}

Now we have the following theorem.

\begin{theorem}\label{th:main}
There is a Gray code for ordered trees with $n$ vertices
such that each ordered tree is derived from the preceding ordered tree
by removing a leaf then appending a leaf.
\end{theorem}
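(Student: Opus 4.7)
The plan is to prove the theorem by induction on $k$, showing that for every $k=1,2,\ldots,n$ the algorithm produces a sequence $(T_1,T_2,\ldots)$ of all ordered trees in $S_k$ such that consecutive trees differ by delete-and-append a leaf and the invariant (co1) holds. The case $k=n$ then yields the desired Gray code.

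For the base cases $k=1$ and $k=2$ there is only one ordered tree, so the claim is trivial and (co1) is vacuous. For the inductive step, assume the Gray code $(T_1,T_2,\ldots)$ for $S_k$ satisfies (co1). I would first observe that by Lemma~\ref{le:ftree1} and Lemma~\ref{le:ftree2} every ordered tree in $S_{k+1}$ appears exactly once as a child of some $T_i$ in $F_n$, so whatever left-to-right ordering we impose on the children of each $T_i$, the resulting sequence $(T'_1,T'_2,\ldots)$ enumerates $S_{k+1}$ without repetition. Thus it suffices to verify that every pair of consecutive trees in $(T'_1,T'_2,\ldots)$ differs by delete-and-append, and that (co2) holds; by Lemma~\ref{le:inv} the latter will serve as (co1) at level $k+1$.

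Next I would check the delete-and-append property for every consecutive pair $T'_j,T'_{j+1}$. If they are siblings in $F^O_{k+1}$, Lemma~\ref{le:ftree3} gives the conclusion. Otherwise $T'_j$ is the rightmost child of some $T_i$ and $T'_{j+1}$ is the leftmost child of $T_{i+1}$, and I would go through Cases 1--4 of the algorithm one by one. In each sub-case of Cases~1, 2, 3b, 3c, 4a, 4b1, 4b2 the algorithm's choice of rightmost/leftmost children is explicitly justified by an item of Lemma~\ref{le:cases}, so the delete-and-append property is immediate. The two potentially troublesome sub-cases, namely 3a1 and 4b3, are shown never to occur: the former by comparing the setting of $C(T_i,1)$ as leftmost child (which only happens when $rpl(T_{i-1})=1$) against $rpl(T_{i+1})=1$ and $T_i$ without pony-tail, contradicting (co1); and the latter precisely by Lemma~\ref{le:casenotoccur}.

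Finally I would invoke Lemma~\ref{le:inv} to conclude that (co2) is preserved throughout the execution at level $k$, so that the sequence at level $k+1$ satisfies (co1), completing the induction. The main obstacle in this plan is the exhaustive verification that no case of the algorithm is missed and that the ``never occurs'' arguments in Cases 3a1 and 4b3 are airtight; these rely critically on (co1) at level $k$ and on Lemma~\ref{le:copy}, which must be applied carefully when identifying which sub-case produced the leftmost child $T$ at the previous step. Once those case analyses are in place, the induction closes and the sequence at level $n$ is a Gray code for $S_n$ in the sense claimed.
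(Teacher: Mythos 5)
Your proposal is correct and follows essentially the same route as the paper: the paper gives no separate proof text for the theorem, because the proof is exactly the induction on levels set out in Section~3, with Lemmas~\ref{le:ftree1}--\ref{le:ftree3} guaranteeing a repetition-free enumeration, Lemma~\ref{le:cases} justifying each algorithmic case, Lemmas~\ref{le:casenotoccur} and the (co1) argument ruling out Cases~4b3 and~3a1, and Lemma~\ref{le:inv} (via Lemma~\ref{le:copy}) propagating the invariant so that (co2) at level $k+1$ serves as (co1) at the next level. Your assembly of these ingredients matches the paper's intended argument.
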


By constructing the necessary part of $F^O_n$ on the fly
one can generate each ordered tree in a Gray code for $S_n$
in $O(n^2)$ time for each ordered tree.

\section{Conclusion}

In this paper 
we have designed a Gray code for ordered trees with $n$ vertices
such that
each ordered tree is derived from the preceding ordered tree
by removing a leaf then appending a leaf.

Can we design a Gray code for binary trees with $n$ vertices
such that
each binary tree is derived from the preceding binary tree by
removing a leaf then appending a leaf?

\bibliography{main}
\bibliographystyle{plainurl}

\end{document}